\newtheorem{theorem}{Theorem}
\newtheorem{lemma}[theorem]{Lemma}
\DeclareMathOperator{\var}{var}
\DeclareMathOperator{\vect}{vec}
\DeclareMathOperator{\cov}{cov}
\DeclareMathOperator{\Ber}{Ber}
\begin{document}

\title{\textbf{Variance Analysis of Randomized Consensus in Switching
Directed~Networks}}
\author{Victor M. Preciado, Alireza Tahbaz-Salehi, and Ali Jadbabaie \thanks{%
This research is supported in parts by the following grants: DARPA/DSO
SToMP, NSF ECS-0347285, ONR MURI N000140810747, and AFOSR: Complex Networks
Program.} \thanks{%
Victor Preciado is with Department of Electrical and Systems Engineering,
University of Pennsylvania, Philadelphia, PA 19104 (e-mail:
preciado@seas.upenn.edu)} \thanks{%
Alireza Tahbaz-Salehi is with Department of Economics and Department of
Electrical and Systems Engineering, University of Pennsylvania,
Philadelphia, PA 19104 (e-mail: atahbaz@seas.upenn.edu).} \thanks{%
Ali Jadbabaie is with the General Robotics, Automation, Sensing and
Perception (GRASP) Laboratory, Department of Electrical and Systems
Engineering, University of Pennsylvania, Philadelphia, PA 19104 (e-mail:
jadbabai@seas.upenn.edu).}}
\maketitle

\begin{abstract}
In this paper, we study the asymptotic properties of distributed consensus
algorithms over switching directed random networks. More specifically, we
focus on consensus algorithms over independent and identically distributed,
directed Erd\H{o}s-R\'{e}nyi random graphs, where each agent can communicate
with any other agent with some exogenously specified probability $p$. While
it is well-known that consensus algorithms over Erd\H{o}s-R\'{e}nyi random
networks result in an asymptotic agreement over the network, an analytical
characterization of the distribution of the asymptotic consensus value is
still an open question. In this paper, we provide closed-form expressions
for the mean and variance of the asymptotic random consensus value, in terms
of the size of the network and the probability of communication $p$. We also
provide numerical simulations that illustrate our results.
\end{abstract}



\section{Introduction}

Due to their wide range of applications, distributed consensus algorithms
have attracted a significant amount of attention in the past few years.
Besides their applications in distributed and parallel computation \cite%
{TsitsiklisThesis}, distributed control \cite{Jad2003}, and robotics \cite%
{Bullo2005}, they have also been used as models of opinion dynamics and
belief formation in social networks \cite{DeGroot1974, Jackson_Naive}. The
central focus in this vast body of literature is to study whether a group of
agents in a network, with \textit{local} communication capabilities can
reach a \textit{global} agreement, using simple, deterministic information
exchange protocols.\footnote{%
For a survey on the most recent works in this area see \cite{Olfati2007}.}

More recently, there has also been some interest in understanding the
behavior of consensus algorithms in random settings \cite%
{Hatano2004,WuRandomConsensus,Porfiri_TAC,Tahbaz_Jad_TAC, Picci_Taylor_CDC,
Asu_misinformation}. The randomness can be either due to the choice of a
randomized network communication protocol or simply caused by the potential
unpredictability of the environment in which the distributed consensus
algorithm is implemented \cite{Zampieri}. It is recently shown that
consensus algorithms over i.i.d. random networks lead to a global agreement
on a possibly random value, as long as the network is connected in
expectation \cite{Tahbaz_Jad_TAC}.

While different aspects of consensus algorithms over random switching
networks, such as conditions for convergence \cite{Hatano2004,
WuRandomConsensus, Porfiri_TAC, Tahbaz_Jad_TAC} and the speed of convergence 
\cite{Zampieri}, have been widely studied, a characterization of the
distribution of the asymptotic consensus value has attracted little
attention. Two notable exceptions are Boyd \textit{et al.} \cite{Boyd2005b},
who study the asymptotic behavior of the random consensus value in the
special case of symmetric networks, and Tahbaz-Salehi and Jadbabaie \cite%
{Tahbaz_Jad_TAC_2}, who compute the mean and variance of the consensus value
for general i.i.d. graph processes. Nevertheless, a complete
characterization of the distribution of the asymptotic value for general 
\textit{asymmetric} random consensus algorithms remains an open problem.

In this paper, we study asymptotic properties of consensus algorithms over a
general class of switching, directed random graphs. More specifically, we derive closed-form
expressions for the mean and variance of the asymptotic consensus value,
when the underlying network evolves according to an i.i.d. \textit{directed}
Erd\H{o}s-R\'{e}nyi random graph process. In our model, at each time period,
a directed communication link is established between two agents with some
exogenously specified probability $p$. It is well-known that due to the
connectivity of the expected graph, consensus algorithms over Erd\H{o}s-R%
\'{e}nyi random graphs result in asymptotic agreement. However, due to the
potential asymmetry in pairwise communications between different agents, the
asymptotic value of consensus is not guaranteed to be the average of the
initial conditions. Instead, agents will asymptotically agree on some random
value in the convex hull of the initial conditions. Our closed-form
characterization of the variance provides a quantitative measure of how
dispersed the random agreement point is around the average of the initial
conditions in terms of the fundamentals of the model, namely, the size of
the network and the exogenous probability of communication $p$.

The rest of the paper is organized as follows. In the next section, we
describe our model of random consensus algorithms. 
In Section \ref{Section_Quad_Analysis}, we derive an explicit expression for
the variance of the limiting consensus value over switching directed Erd\H{o}%
s-R\'{e}nyi random graphs in terms of the size of the network $n$ and the
communication probability $p$. Section IV contains simulations of our
results and Section V concludes the paper.


\section{Consensus Over Switching Random Graphs}

\label{Section_Model}

Consider the discrete-time linear dynamical system 
\begin{equation}  \label{consensus_update}
\mathbf{x}\left( k\right) =W_{k}\mathbf{x}\left( k-1\right) ,
\end{equation}
where $k\in \left\{ 1,2,\dots\right\} $ is the discrete time index, $\mathbf{%
x}(k) \in \mathbb{R}^{n}$ is the state vector at time $k$, and $%
\{W_{k}\}_{k=1}^\infty$ is a sequence of stochastic matrices. We interpret (%
\ref{consensus_update}) as a distributed scheme where a collection of
agents, labeled 1 through $n$, update their state values as a convex
combination of the state values of their neighbors at the previous time
step. Given this interpretation, $\mathbf{x}_i(k)$ corresponds to the state
value of agent $i$ at time $k$, and $W_k$ captures the neighborhood relation
between different agents at time $k$: the $ij$ element of $W_k$ is positive
only if agent $i$ has access to the state of agent $j$. For the remainder of
the paper, we assume that the weight matrices $W_k$ are randomly generated
by an independent and identically distributed matrix process.

We say dynamical system (\ref{consensus_update}) reaches \textit{consensus}
asymptotically on some path $\{W_k\}_{k=1}^\infty$, if along that path,
there exists $x^*\in\mathbb{R}$ such that $\mathbf{x}_i(k)\rightarrow x^*$
for all $i$ as $k\rightarrow\infty$. We refer to $x^*$ as the \textit{%
consensus value}. It is well-known that for i.i.d. random networks dynamical
system (\ref{consensus_update}) reaches consensus on almost all paths if and
only if the graph corresponding to the communications between agents is
connected in expectation. More precisely, Tahbaz-Salehi and Jadbabaie \cite%
{Tahbaz_Jad_TAC} show that $W_k\dots W_2W_1 \longrightarrow \mathbf{1}d^T$
almost surely $-$ where $d$ is some random vector $-$ if and only if the
second largest eigenvalue modulus of $\mathbb{E}W_k$ is subunit. Clearly,
under such conditions, dynamical system (\ref{consensus_update}) reaches
consensus with probability one where the consensus value is a random
variable equal to $x^*=d^T\mathbf{x}(0)$, where $\mathbf{x}(0)$ is the
vector of initial conditions.

A complete characterization of the random consensus value $x^{\ast }$ is an
open problem. However, it is possible to compute its mean and variance in
terms of the first two moments of the i.i.d. weight matrix process. In \cite%
{Tahbaz_Jad_TAC_2}, the authors prove that the conditional mean of the
random consensus value is given by the random consensus value are given by 
\begin{equation*}
\mathbb{E}x^{\ast }=\mathbf{x}(0)^{T}\mathbf{v}_{1}(\mathbb{E}W_{k}),
\end{equation*}%
and its conditional variance is equal to 
\begin{eqnarray}
\lefteqn{\var(x^{\ast })=\left[ \mathbf{x}(0)\otimes \mathbf{x}(0)\right]
^{T}\vect(\cov(d))}  \label{variance_expression} \\
&&\!\!\!\!\!=\left[ \mathbf{x}(0)\otimes \mathbf{x}(0)\right] ^{T}\mathbf{v}%
_{1}(\mathbb{E}\left[ W_{k}\otimes W_{k}\right] )-[\mathbf{x}(0)^{T}\mathbf{v%
}_{1}(\mathbb{E}W_{k})]^{2}  \notag
\end{eqnarray}%
where $\mathbf{v}_{1}\left( \cdot \right) $ denotes the normalized left
eigenvector corresponding to the unit eigenvalue, and $\otimes $ denotes the
Kronecker product. In the following, we shall use (\ref{variance_expression}%
) to derive an explicit expression for the mean and variance of the
consensus value over a class of switching, directed random graphs..




\section{Variance Analysis for Finite Erd\H{o}s-R\'{e}nyi Random Graphs}

\label{Section_Quad_Analysis}

\subsection{Directed Erd\H{o}s-R\'{e}nyi Random Graphs}

We consider directed graphs $G=\left( V,E\right) $ with a fixed set of
vertices $V=\left\{ 1,...,n\right\} $ and directed edges. A directed edge
from vertex $i$ to vertex $j$ is representes as an ordered pair $\left(
i,j\right) $, with $i,j\in V$. In a \textit{directed} Erd\H{o}s-R\'{e}nyi
(ER) graph $\mathcal{G}\left( n,p\right) $, the existence of a directed edge 
$\left( i,j\right) ,$ with $i\neq j,$ is determined randomly and
independently of other edges with a fixed probability $p\in \left[ 0,1\right]
$. The adjacency matrix $A=\left[ a_{ij}\right] $ associated to $\mathcal{G}%
\left( n,p\right) $ is a random matrix with all zeros in the diagonal, and
off-diagonal elements $a_{ij}=1$ with probability $p$, and $0$ with
probability $1-p$, for $i\neq j$. Similarly, the out-degree matrix can be
defined from the adjacency matrix as $D=$diag$\left\{ d_{i}\right\} $, where 
$d_{i}=\sum_{j}a_{ij}$.

In what follows, we associate a sequence of stochastic matrices $%
\{W_{k}\}_{k=1}^{\infty }$ to a sequence of i.i.d random realizations of
directed ER graphs $\left\{ \mathcal{G}_{k}\left( n,p\right) \right\}
_{k=1}^{\infty }$. To each random graph realization, we associate the
following stochastic matrix:%
\begin{equation}
W_{k}=\left( D_{k}+I_{n}\right) ^{-1}\left( A_{k}+I_{n}\right) ,
\label{Stochastic Matrix}
\end{equation}%
where $A_{k}$ and $D_{k}$ are the adjacency and out-degree matrices of the
graph realization. Notice that adding the identity matrix to the adjacency
in (\ref{Stochastic Matrix}) is equivalent to introduce a self-loops (an
edge that starts and ends at the same vertex) over every single vertex in $V$%
. These self-loops serve to avoid singularities associated with the presence
of isolated nodes in $\mathcal{G}_{k}\left( n,p\right) $ (for which $d_{i}=0$%
, and $D_{k}$ is not invertible).


\subsection{\label{Start Computations}Variance of Consensus Value}

In this section, we derive an explicit expression for the variance of the
limiting consensus value for a switching directed random graph. We base our
analysis in studying the terms in (\ref{variance_expression}), i.e., $%
\mathbf{v}_{1}\left( \mathbb{E}W_{k}\right) $ and $\mathbf{v}_{1}\left( 
\mathbb{E}\left[ W_{k}\otimes W_{k}\right] \right) $. In order to compute $%
\mathbf{v}_{1}\left( \mathbb{E}W_{k}\right) $, we first compute the
expectation of each entry in $W_{k}$. The expectation of the diagonal
entries of $W_{k}$ are equal to $\mathbb{E}\left[ 1/\left( d_{i}+1\right) %
\right] $, where $d_{i}$ is a random variable representing the degree of the 
$i$-th node. In a random ER graph with probability of link $p$ (and
complement $q\triangleq 1-p)$, the probability density of $d_{i}$ is a
Bernoulli distribution with $n-1$ trials and parameter $p$, i.e., $f\left(
d_{i}\right) \sim \Ber\left( n-1,p\right) $. Hence, 
\begin{eqnarray}
\mathbb{E}w_{ii} &=&\mathbb{E}\left[ \frac{1}{d_{i}+1}\right]
=\sum_{k=0}^{n-1}\frac{1}{k+1}\binom{n-1}{k}p^{k}q^{n-k-1}  \notag \\
&=&\frac{1-q^{n}}{np}\triangleq f_{1}\left( p,n\right) ,  \label{Diagonal}
\end{eqnarray}%
where we have defined $f_{1}$ for future convenience. Furthermore, the
off-diagonal elements of $\mathbb{E}W_{k}$ are equal to 
\begin{equation*}
\mathbb{E}w_{ij}=\mathbb{E}\left[ \frac{a_{ij}}{d_{i}+1}\right] =\mathbb{E}%
\left[ \left. \frac{1}{d_{i}+1}\right\vert a_{ij}=1\right] \mathbb{P}\left(
a_{ij}=1\right) ,
\end{equation*}%
where we have applied the law of total expectation in the last equality.
Moreover, it is straightforward to show that $f(d_{i}-1|a_{ij}=1)\sim \Ber%
(n-2,p)$; thus, 
\begin{eqnarray}
\mathbb{E}w_{ij} &=&p\sum_{k=0}^{n-2}\frac{1}{k+2}\binom{n-2}{k}%
p^{k}q^{n-k-2}  \notag \\
&=&\frac{q^{n}+np-1}{np\left( n-1\right) }=\frac{1-f_{1}\left( p,n\right) }{%
n-1}.  \label{Off-diagonals}
\end{eqnarray}%
Taking (\ref{Diagonal}) and (\ref{Off-diagonals}) into account, we can write 
$\mathbb{E}W_{k}$ as follows: 
\begin{eqnarray*}
\mathbb{E}W_{k} &=&\mathbb{E}w_{ij}\mathbf{1}_{n}\mathbf{1}_{n}^{T}+\left( 
\mathbb{E}w_{ii}-\mathbb{E}w_{ij}\right) I_{n} \\
&=&\frac{1-f_{1}\left( p,n\right) }{n-1}\mathbf{1}_{n}\mathbf{1}_{n}^{T}-%
\frac{1-n~f_{1}\left( p,n\right) }{n-1}I_{n}.
\end{eqnarray*}%
It is easy to verify that $\mathbb{E}W_{k}$ is irreducible. Therefore, as
discussed in the previous section, consensus algorithms over i.i.d. directed
Erd\H{o}s-R\'{e}nyi random graph process converge to consensus with
probability one. Moreover, it is straightforward to show that vector $%
\mathbf{1}_{n}$ satisfies the eigenvalue equation $\mathbf{1}_{n}^{T}~%
\mathbb{E}W_{k}=$ $\mathbf{1}_{n}^{T}$; thus, 
\begin{equation}
\mathbf{v}_{1}\left( \mathbb{E}W_{k}\right) =\frac{1}{n}\mathbf{1}_{n}.
\label{Small Eigenvector}
\end{equation}%
Therefore, as expected, the mean of the random consensus value is equal to
the average of $\mathbf{x}\left( 0\right) $, i.e., 
\begin{equation}
\mathbb{E}x^{\ast }=\frac{1}{n}\sum_{i=1}^{n}x_{i}\left( 0\right) \triangleq 
\bar{x}\left( 0\right) .  \label{Asymptotic Expectation}
\end{equation}

The other term in the expression of variance that we need to compute is $%
\mathbf{v}_{1}(\mathbb{E}\left[ W_{k}\otimes W_{k}\right])$. In order to
compute this vector, we first compute the entries of matrix $\mathbb{E}\left[
W_{k}\otimes W_{k}\right]$, which are of the form $\mathbb{E}(w_{ij}w_{rs})$%
, with $i, j, r$ and $s$ ranging from $1$ to $n$. The entries can be
classified into six different cases depending on the relations between the
indices. Below, we present the expressions for each case. Some of the
expressions are in terms of the \textit{hypergeometric} function $_{3}
F_2(1,1,1-n;2,2;p/(p-1))$, which for convenience we denote by $H(p,n)$,
defined as the power series 
\begin{equation*}
H(p,n) = \sum_{k=0}^{n-1} \frac{1}{(k+1)^2}\binom{n-1}{k} \left(\frac{p}{1-p}%
\right)^{k}.
\end{equation*}
In the following expressions we assume that all four indices $i,j,r$ and $s$
are distinct. Detailed computations are provided in the Appendix.

\begin{eqnarray}
Q_1&=&\mathbb{E}(w_{ii}^2) = q^{n-1} H(p,n),  \label{expectation cases} \\
Q_2&=&\mathbb{E}(w_{ii}w_{jj}) = f_{1}^{2}(p,n),  \notag \\
Q_3&=&\mathbb{E}(w_{ii}w_{is}) = \mathbb{E}(w_{ij}w_{ii}) = \mathbb{E}%
(w_{ij}^2)  \notag \\
& = & \frac{f_1(p,n)-q^{n-1}H(p,n)}{n-1},  \notag \\
Q_4&=&\mathbb{E}(w_{ii}w_{ri}) = \mathbb{E}(w_{ii}w_{rs})=f_1(p,n) \frac{%
1-f_1(p,n)}{n-1}  \notag \\
Q_5&=&\mathbb{E}(w_{ij}w_{is}) = \frac{1+2q^{n-1}H(p,n)-3f_1(p,n)}{(n-1)(n-2)%
},  \notag \\
Q_6&=&\mathbb{E}(w_{ij}w_{ji}) = \mathbb{E}(w_{ij}w_{js})= \mathbb{E}%
(w_{ij}w_{ri})  \notag \\
& = & \mathbb{E}(w_{ij}w_{rj}) = \mathbb{E}(w_{ij}w_{rs}) = \left( \frac{%
1-f_1(p,n)}{n-1}\right)^2.  \notag
\end{eqnarray}

As stated earlier, every entry of $\mathbb{E}[W_k\otimes W_k]$ is equal to
one of the expressions provided in (\ref{expectation cases}). The key
observation is the pattern that such classification of entries induces in
the matrix. In order to clarify this point, we illustrate this pattern for $%
n=3$, where the numbers in parenthesis correspond to one of the six cases
identified above. As the figure suggests, the entries of $\mathbb{E}[
W_{k}\otimes W_{k}] $ are identical to the entries of $K\otimes K$ except
for those at the rows $1+r(n+1)$ for $r=0,...,n-1$, where matrix $K\in%
\mathbb{R}^{n\times n}$ is defined as 
\begin{equation*}
K = \frac{1-f_1(p,n)}{n-1}\mathbf{1}_{n}\mathbf{1}_{n}^T+\frac{nf_{1}(p,n)-1%
}{n-1}I_{n},
\end{equation*}
with $f_{1}(p,n)$ as its diagonal, and $[1-f_1(p,n)]/(n-1)$ as its
off-diagonal entries. In Fig. 1, the entries of $\mathbb{E}[W_k\otimes W_k]$
that are different from the entries of $K\otimes K$ are marked with bold
lines.

\begin{figure}
 \centering
 \includegraphics[width=0.75\linewidth]{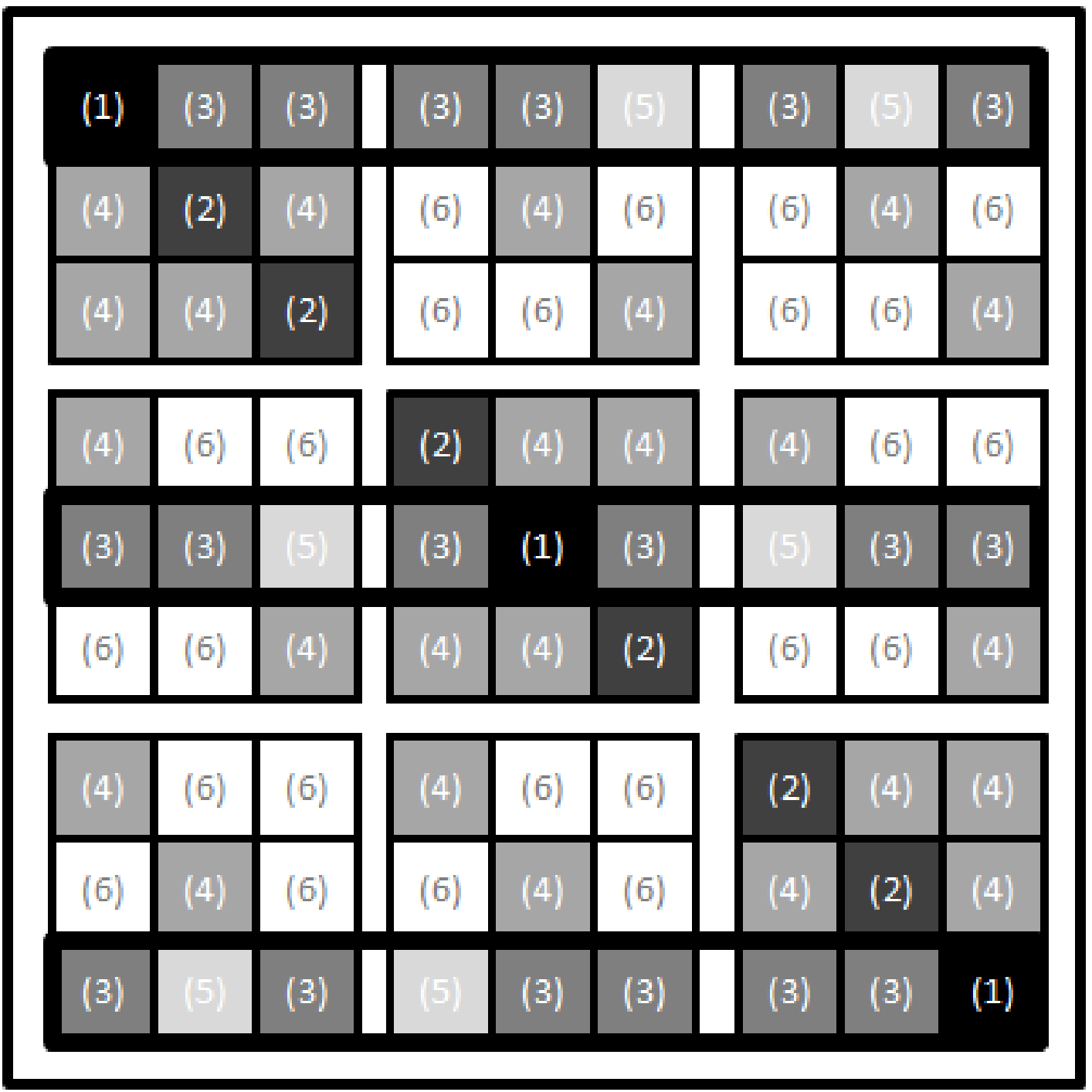}
 \caption{The pattern of $\mathbb{E}[W_k\otimes W_k]$ for $n=3$. The numbers in parentheses
represent the value of each entry in terms of expressions $Q1$ to $Q6$ defined
in (8). The entries that are different from the corresponding entries of $K\otimes K$ are marked with bold lines.}
\end{figure}

We now exploit the identified pattern to explicitly compute the left
eigenvector $\mathbf{v}_{1}(\mathbb{E}[W_{k}\otimes W_{k}])$.

\begin{lemma}
The left eigenvector of $\mathbb{E}\left[ W_{k}\otimes W_{k}\right]$
corresponding to its unit eigenvalue is given by 
\begin{equation}  \label{Big Eigenvector}
\mathbf{v}_{1}(\mathbb{E}[W_{k}\otimes W_{k}]) =\frac{1}{\delta}\left[\rho (%
\mathbf{1}_{n}\mathbf{\otimes 1}_{n}) +(1-\rho) \sum_{i=1}^{n}(\mathbf{e}%
_{i}\otimes \mathbf{e}_{i})\right]
\end{equation}
where $\rho$ and $\delta$ depend on $p$ and $n$ as follows: 
\begin{eqnarray}
\rho(p,n) &\triangleq & \frac{p(n-1)}{p(n-2)+1-(1-p)^n},
\label{Explicit Ratio} \\
\delta(p,n) &\triangleq & n+n(n-1)\rho(p,n).  \label{Explicit Sigma}
\end{eqnarray}
\end{lemma}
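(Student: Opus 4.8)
The plan is to exploit the block structure of $M\triangleq\mathbb{E}[W_k\otimes W_k]$ exhibited in Fig.~1 and reduce the $n^{2}$-dimensional left-eigenvector problem to a two-dimensional one. Writing $K=\mathbb{E}W_k$, which is symmetric with diagonal $f_{1}(p,n)$ and off-diagonal $[1-f_{1}(p,n)]/(n-1)$, I would first record that every row of $M$ indexed by a pair $(i,r)$ with $i\neq r$ coincides with the corresponding row of $K\otimes K$: for $i\neq r$ the weights $w_{ij}$ and $w_{rs}$ are functions of the out-edges of the distinct nodes $i$ and $r$, hence independent, so $\mathbb{E}[w_{ij}w_{rs}]=K_{ij}K_{rs}$. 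Consequently $M=K\otimes K+E$, where the correction $E$ is supported only on the $n$ ``diagonal'' rows indexed by $(i,i)$, i.e.\ the rows $1+r(n+1)$ marked in Fig.~1. Introducing $\mathbf{u}\triangleq\mathbf{1}_n\otimes\mathbf{1}_n$ and $\mathbf{s}\triangleq\sum_{i=1}^{n}\mathbf{e}_i\otimes\mathbf{e}_i=\vect(I_n)$, the claimed vector (\ref{Big Eigenvector}) is $\mathbf{v}=\tfrac1\delta[\rho\,\mathbf{u}+(1-\rho)\mathbf{s}]$, and it suffices to verify the identity $\mathbf{v}^{T}M=\mathbf{v}^{T}$ together with the normalization.

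Next I would show that $\mathrm{span}\{\mathbf{u},\mathbf{s}\}$ is invariant under the left action of $M$, collapsing the problem to a $2\times2$ computation. For the $K\otimes K$ part I would use $\mathbf{1}_n^{T}K=\mathbf{1}_n^{T}$ from (\ref{Small Eigenvector}) to get $\mathbf{u}^{T}(K\otimes K)=\mathbf{u}^{T}$, and the identity $\mathbf{s}^{T}(K\otimes K)=[(K^{T}\otimes K^{T})\vect(I_n)]^{T}=\vect(K^{2})^{T}$ (using $K=K^{T}$); since $K=\alpha\mathbf{1}_n\mathbf{1}_n^{T}+\beta I_n$ one has $K^{2}=\gamma\mathbf{1}_n\mathbf{1}_n^{T}+\beta^{2}I_n$, so this equals $\gamma\mathbf{u}^{T}+\beta^{2}\mathbf{s}^{T}$. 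For the correction, the key observation is that $\mathbf{u}$ and $\mathbf{s}$ both take the value $1$ on the diagonal positions $(i,i)$, which are exactly the support of the nonzero rows of $E$; hence $\mathbf{u}^{T}E=\mathbf{s}^{T}E=\sum_i(\text{row }(i,i)\text{ of }E)$, a row vector that is constant on diagonal columns $(j,j)$ and constant on off-diagonal columns $(j,s)$, $j\neq s$. Reading off these two constants $c_{\mathrm{diag}},c_{\mathrm{off}}$ as column-sums of $E$ over the diagonal rows, expressed through the six cases (\ref{expectation cases}), gives $\mathbf{u}^{T}E=\mathbf{s}^{T}E=c_{\mathrm{off}}\mathbf{u}^{T}+(c_{\mathrm{diag}}-c_{\mathrm{off}})\mathbf{s}^{T}$.

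Assembling these pieces, $\mathbf{v}^{T}M=\mathbf{v}^{T}$ becomes two scalar equations in $\rho$: the $\mathbf{u}^{T}$-component $(1-\rho)\gamma+c_{\mathrm{off}}=0$ and the $\mathbf{s}^{T}$-component $(1-\rho)\beta^{2}+c_{\mathrm{diag}}-c_{\mathrm{off}}=1-\rho$. Solving the first for $\rho$ yields (\ref{Explicit Ratio}); the clean form emerges because the hypergeometric terms $q^{n-1}H(p,n)$ appearing in $Q_1,Q_3,Q_5$ cancel inside both $c_{\mathrm{diag}}$ and $c_{\mathrm{off}}$ (for instance $2Q_3+(n-2)Q_5=[1-f_{1}(p,n)]/(n-1)$, free of $H$), and because the identity $\alpha n+\beta=1$ collapses the numerator. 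Finally I would fix $\delta$ by the normalization $\mathbf{v}^{T}\mathbf{1}_{n^{2}}=1$: since $\mathbf{u}$ has $n^{2}$ unit entries and $\mathbf{s}$ has $n$, the entry sum of $\rho\mathbf{u}+(1-\rho)\mathbf{s}$ is $\rho n^{2}+(1-\rho)n=n+n(n-1)\rho=\delta$, which is (\ref{Explicit Sigma}).

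The main obstacle is the consistency verification. The first component equation pins down $\rho$, but $\mathbf{v}$ is a genuine unit left eigenvector only if the second component equation holds for that same $\rho$ (equivalently, $\det(T-I)=0$ for the $2\times2$ matrix $T$ of the compressed action); this is not automatic from invariance of the subspace alone. It is precisely here that the detailed Erd\H{o}s--R\'{e}nyi moment structure must cooperate: one substitutes the explicit forms of $Q_1,\dots,Q_6$, and hence of $f_{1}(p,n)$ and $H(p,n)$, and checks that after the hypergeometric cancellations both equations reduce to the single rational expression (\ref{Explicit Ratio}). Carrying out this bookkeeping cleanly, rather than any conceptual difficulty, is the crux of the argument.
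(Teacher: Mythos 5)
Your proposal is correct and follows essentially the same route as the paper: both restrict the left action of $\mathbb{E}[W_{k}\otimes W_{k}]$ to the two-dimensional subspace spanned by $\mathbf{1}_{n}\otimes \mathbf{1}_{n}$ and $\sum_{i=1}^{n}\mathbf{e}_{i}\otimes \mathbf{e}_{i}$, reduce to a $2\times 2$ eigenproblem whose solution gives $\rho$, and fix $\delta$ by normalization, with your consistency condition $\det(T-I)=0$ being exactly the paper's identity $BC=(1-A)(1-D)$ for the compressed coefficients. The differences are purely organizational (you split the matrix as $K\otimes K$ plus a correction supported on the $n$ diagonal rows, whereas the paper computes $A,B,C,D$ entrywise), and your claimed cancellations, e.g.\ $2Q_{3}+(n-2)Q_{5}=[1-f_{1}(p,n)]/(n-1)$, do check out.
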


\vspace{0.1in} 
\begin{proof}
First of all, notice that $\mathbb{E}[W_k\otimes W_k]$ is a stochastic matrix whose entries are all strictly positive for $p>0$. Therefore, it has a unique left eigenvector corresponding to its unit eigenvalue, which means that $\mathbf{v}_1(\mathbb{E}[W_k\otimes W_k])$ is well-defined. We now show that the pattern of this left eigenvector is of the form
\begin{equation}\label{Proposed Eigenvector}
\mathbf{v} = \alpha (\mathbf{1}_{n}\mathbf{\otimes 1}_{n}) +(\beta-\alpha ) \sum_{i=1}^{n}(\mathbf{e}_{i}\otimes \mathbf{e}_{i}) ,  
\end{equation}
for some positive numbers $\alpha$ and $\beta$. Notice that all entries of this vector are equal to $\alpha$, except for the ones indexed $1+r( n+1)$ for $r=0,...,n-1$, which are equal to $\beta$. To show that the eigenvector we are looking for is indeed of the given pattern, we premultiply $\mathbb{E}[W_{k}\otimes W_{k}] $ by $\mathbf{v}$, and verify that
\begin{equation}\label{Big Linear Map}
\mathbf{v}^{T}\mathbb{E}[ W_{k}\otimes W_{k}] =\alpha'(\mathbf{1\otimes 1})^{T}+(\beta'-\alpha') \sum_{i=1}^{n}(\mathbf{e}_{i}\otimes\mathbf{e}_{i})^{T}.  
\end{equation}
where $\alpha'$ and $\beta'$ are positive numbers, given by 
\begin{equation}\label{albet}
\begin{bmatrix}
\alpha'\\
\beta'
\end{bmatrix}=\begin{bmatrix}
A & B\\
C & D
\end{bmatrix}\begin{bmatrix}
\alpha\\ \beta
\end{bmatrix}
\end{equation}
with coefficients $A$, $B$, $C$, and $D$ defined as
\begin{eqnarray*}
A & = & 1+ [nf_1(p,n)+n-2]\frac{1-f_1(p,n)}{(n-1)^2}  \\
B & = & \frac{1-f_1(p,n)}{n-1}\\
C & = & [nf_1(p,n)+n-2]\frac{1-f_1(p,n)}{n-1}\\
D & = & f_1(p,n)
\end{eqnarray*}
Equation (\ref{Big Linear Map}) suggests that the pattern of $\mathbf{v}_1$ is preserved when it is multiplied by $\mathbb{E}[W_k\otimes W_k]$. Therefore, the vector defined in (\ref{Proposed Eigenvector}) is the unique left eigenvector of the matrix if there are positive numbers $\alpha=\alpha'$ and $\beta=\beta'$ that satisfy (\ref{albet}). 

Due to the fact $BC=(1-A)(1-D)$, the matrix in (\ref{albet}) has an eigenvalue equal to one, with eigenvector $\begin{bmatrix}B & 1-A\end{bmatrix}^T$, implying that such $(\alpha,\beta,\alpha',\beta')$ exist. Thus, the proposed vector in (\ref{Proposed Eigenvector}) is an eigenvector of $\mathbb{E}[W_k\otimes W_k]$ as long as $\alpha=\beta B/(1-A)=\rho(p,n)\beta$, which means that
\begin{equation*}
\mathbf{v}_{1}(\mathbb{E}[W_{k}\otimes W_{k}]) =\frac{1}{\delta}\left[\rho (\mathbf{1}_{n}\mathbf{\otimes 1}_{n}) +(1-\rho) \sum_{i=1}^{n}(\mathbf{e}_{i}\otimes \mathbf{e}_{i})\right]. 
\end{equation*}
Note that $\delta(p,n)$, defined in (\ref{Explicit Sigma}), is a normalizing factor guaranteeing that the elements of the vector sum up to one.
\end{proof}

Now that we have derived explicit expressions for the eigenvectors (\ref%
{Small Eigenvector}) and (\ref{Big Eigenvector}), we can compute a
closed-form expression for the variance of the limiting consensus value in
terms of $p$ and $n$.

\begin{theorem}
\label{Variance Expression} The variance of the asymptotic consensus value $%
x^{\ast }$ of the distributed update defined in (\ref{consensus_update})
over switching Erd\H{o}s-R\'{e}nyi random graphs with parameter $p$ is given
by 
\begin{equation}
\var(x^{\ast })=\frac{1-\rho }{\delta }\sum_{i=1}^{n}\left[ x_{i}(0)-\bar{x}%
(0)\right] ^{2},  \label{Explicit Variance}
\end{equation}%
where $\rho (p,n)$ and $\delta (p,n)$ are defined in (\ref{Explicit Ratio})
and (\ref{Explicit Sigma}), respectively.
\end{theorem}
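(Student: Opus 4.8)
The plan is to substitute the two explicitly computed eigenvectors into the variance formula (\ref{variance_expression}) and simplify. From the earlier analysis we already have $\mathbf{v}_1(\mathbb{E}W_k) = \frac{1}{n}\mathbf{1}_n$ and the Lemma gives $\mathbf{v}_1(\mathbb{E}[W_k\otimes W_k])$ in closed form. The formula reads
\begin{equation*}
\var(x^*) = [\mathbf{x}(0)\otimes\mathbf{x}(0)]^T \mathbf{v}_1(\mathbb{E}[W_k\otimes W_k]) - \left[\mathbf{x}(0)^T \mathbf{v}_1(\mathbb{E}W_k)\right]^2,
\end{equation*}
so the two terms can be handled separately and then combined.

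For the second (subtracted) term, substituting $\mathbf{v}_1(\mathbb{E}W_k)=\frac{1}{n}\mathbf{1}_n$ immediately yields $\left[\frac{1}{n}\mathbf{x}(0)^T\mathbf{1}_n\right]^2 = \bar{x}(0)^2$, by the definition of $\bar{x}(0)$ in (\ref{Asymptotic Expectation}). For the first term, I would use the Kronecker-product identity $[\mathbf{x}(0)\otimes\mathbf{x}(0)]^T(\mathbf{1}_n\otimes\mathbf{1}_n) = (\mathbf{x}(0)^T\mathbf{1}_n)^2 = n^2\bar{x}(0)^2$ and, for the second piece of the eigenvector, $[\mathbf{x}(0)\otimes\mathbf{x}(0)]^T(\mathbf{e}_i\otimes\mathbf{e}_i) = (\mathbf{x}(0)^T\mathbf{e}_i)^2 = x_i(0)^2$, so that summing over $i$ gives $\sum_{i=1}^n x_i(0)^2$. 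Plugging the Lemma's expression in, the first term becomes
\begin{equation*}
\frac{1}{\delta}\left[\rho\, n^2\bar{x}(0)^2 + (1-\rho)\sum_{i=1}^n x_i(0)^2\right].
\end{equation*}

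The final step is to assemble these and show the cross terms collapse to the stated centered sum of squares. Subtracting $\bar{x}(0)^2$ and recalling from (\ref{Explicit Sigma}) that $\delta = n + n(n-1)\rho = n\left[1+(n-1)\rho\right]$, I would verify algebraically that
\begin{equation*}
\frac{\rho n^2}{\delta}\bar{x}(0)^2 - \bar{x}(0)^2 = -\frac{1-\rho}{\delta}\,n\,\bar{x}(0)^2,
\end{equation*}
which follows because $\rho n^2 - \delta = \rho n^2 - n - n(n-1)\rho = (n-1)(\rho n - n)\cdot\tfrac{?}{}$; the clean way is to write $\rho n^2 - \delta = n[\rho n - 1 - (n-1)\rho] = n[\rho - 1] = -n(1-\rho)$. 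Combining this with the $(1-\rho)/\delta$ coefficient on $\sum x_i(0)^2$ gives $\frac{1-\rho}{\delta}\left[\sum_i x_i(0)^2 - n\bar{x}(0)^2\right]$, and the bracket equals $\sum_{i=1}^n [x_i(0)-\bar{x}(0)]^2$ by the standard variance identity $\sum_i x_i^2 - n\bar{x}^2 = \sum_i(x_i-\bar{x})^2$.

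This is essentially a direct computation with no serious obstacle; the only place requiring care is the bookkeeping in the cross-term cancellation, namely confirming the identity $\rho n^2 - \delta = -n(1-\rho)$ using the definition of $\delta$, and correctly applying the two Kronecker identities $(\mathbf{x}\otimes\mathbf{x})^T(\mathbf{a}\otimes\mathbf{b}) = (\mathbf{x}^T\mathbf{a})(\mathbf{x}^T\mathbf{b})$ for the two building blocks of the eigenvector. I do not expect to need the explicit form of $\rho(p,n)$ from (\ref{Explicit Ratio}) at all; the result (\ref{Explicit Variance}) is stated in terms of $\rho$ and $\delta$, so the proof stays at the level of these symbols and the dependence on $p$ and $n$ is inherited entirely from the Lemma.
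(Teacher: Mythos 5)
Your proposal is correct and takes essentially the same route as the paper's own proof: substituting the two eigenvectors into (\ref{variance_expression}), applying the Kronecker identity to obtain $\frac{\rho}{\delta}n^{2}\bar{x}(0)^{2}+\frac{1-\rho}{\delta}\sum_{i}x_{i}(0)^{2}$, and then cancelling the $\bar{x}(0)^{2}$ terms via $\delta=n+n(n-1)\rho$ --- your identity $\rho n^{2}-\delta=-n(1-\rho)$ is exactly the paper's add-and-subtract of $n\frac{1-\rho}{\delta}\bar{x}(0)^{2}$ in different packaging. The abandoned factoring fragment in your write-up is immediately superseded by the correct computation, so there is no gap.
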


\begin{proof}
First, from (\ref{Small Eigenvector}), we have that
\begin{equation*}
\left[\mathbf{x}(0) ^{T}\mathbf{v}_{1}(\mathbb{E}W_{k})\right] ^{2}=\left(\frac{1}{n}\sum_{i=1}^{n}x_{i}(0) \right) ^{2}=\bar{x}(0) ^{2}.
\end{equation*}
On the other hand, from (\ref{Big Eigenvector}), we have that
\begin{eqnarray*}
\lefteqn{\left[\mathbf{x}(0) \otimes \mathbf{x}(0)\right]^{T} \mathbf{v}_{1}\left( \mathbb{E}[ W_{k}\otimes W_{k}] \right) =} &&\\
&=&\frac{1}{\delta}\left[\mathbf{x}(0) \otimes \mathbf{x}(0)\right]^{T} \left[\rho (\mathbf{1}_{n}\mathbf{\otimes 1}_{n}) +(1-\rho) \sum_{i=1}^{n}(\mathbf{e}_{i}\otimes \mathbf{e}_{i})\right]\\
&=&\frac{\rho }{\delta }[\mathbf{x}(0)^{T}\mathbf{1}_{n}]\otimes [\mathbf{x}(0)^{T}\mathbf{1}_{n}]  \\
&&+\frac{1-\rho}{\delta}\sum_{i=1}^{n}\left([\mathbf{x}(0)^{T}\mathbf{e}_{i}]\otimes [\mathbf{x}(0)^{T}\mathbf{e}_{i}]\right) ,
\end{eqnarray*}
where we have used the fact that $(A\otimes B) (C\otimes D) =AC\otimes BD $. Since the Kronecker terms in the last expression are scalars, we have
\begin{equation*}
[\mathbf{x}(0) \otimes \mathbf{x}(0)]^{T} \mathbf{v}_{1}(\mathbb{E}[ W_{k}\otimes W_{k}]) =\frac{\rho}{\delta}n^{2}\bar{x}(0)^{2} +\frac{1-\rho}{\delta} \sum_{i=1}^{n} x^2_i(0)
\end{equation*}
and therefore,
\begin{equation*}
\var(x^*) =\left(\frac{\rho}{\delta}n^{2}-1\right) [\bar{x}(0)]^{2}+\frac{1-\rho }{\delta}\sum_{i=1}^{n} [x_{i}(0)] ^{2}.
\end{equation*}
By adding and subtracting $n\frac{1-\rho}{\delta }[\bar{x}\left( 0\right)]^{2}$, the expression for the variance can be rewritten as
\begin{eqnarray*}
\var x^* &=& \frac{n(n-1)\rho+n-\delta}{\delta} [\bar{x}(0)]^{2} \\
&&+\frac{1-\rho}{\delta}\sum_{i=1}^{n}\left( x_{i}(0) -\bar{x}(0) \right) ^{2}.
\end{eqnarray*}
Since $\delta = n+n(n-1)\rho$, as defined in (\ref{Explicit Sigma}), the first term in the right-hand-side of the above expression is equal to zero. This proves the theorem.
\end{proof}

Expression (\ref{Explicit Variance}) shows that, given the parameters of the
random graph process $p$ and $n$, the variance of the limiting consensus
value, $x^{\ast }$, is equal to the empirical variance of the initial
conditions multiplied by the factor $n(1-\rho )/\delta $, which only depends
on parameters $p$ and $n$.

\bigskip 

\section{Numerical Simulations}

In this subsection, we present several simulations that illustrate the
result in Theorem \ref{Variance Expression}. In our first simulation, we
compare the analytical expression for the variance in (\ref{Explicit
Variance}) with the empirical variance obtained from 100 realizations of the
random consensus algorithm for $n$ in a certain range.
In our simulations, we compute the (analytical and empirical) variances for
a range of network sizes while keeping the expected out-degree of the random
graphs fixed to a constant value $c$ (i.e., the probability communication in the
random graphs is then $p=c/n$, for all $n$). In Fig. 1, we plot both the
analytical and empirical variances when the network sizes $n$ goes from 5 to
50 nodes and the expected degree is fixed to be $c=5,$ for all $n$. The
initial conditions for each network size is given by $x_{i}\left( 0\right)
=i/n$, for $i=1,...,n$.

\bigskip

\begin{figure}
 \centering
 \includegraphics[width=0.95\linewidth]{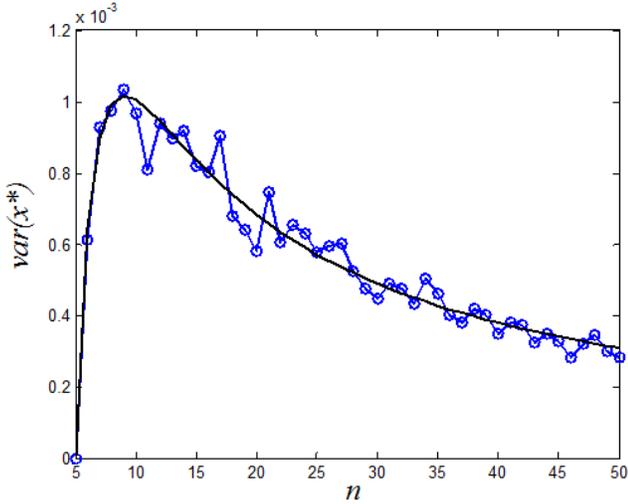}
 \caption{Comparison between the empirical variance and the analytical variance for $n$ in the interval [5:50] and $p=5/n$ for all $n$.}
\end{figure}

\bigskip

Several comments are in order about the behavior of $\var\left( x^{\ast
}\right) $ in Fig. 1. First, for $n=c=5$, we have that $p=c/n=1$ (every link
exist); hence, the random graph is not random, but a complete graph $K_{n}$,
and the distributed consensus algorithm converges to the average of the
initial conditions with zero variance (as one can check in Fig. 1). Second,
when $n$ is slightly over $c$, the variance increases quite abruptly with $n$
until it reaches a maximum value. For $c=5$, this maximum is achieved for a
network size of $9$ nodes. The location of this interesting point, that we
denote by $\hat{n}\left( c\right) $, can be easily computed using Theorem %
\ref{Variance Expression}. Furthermore, for $n>$ $\hat{n}(c)$ the variance
slowly decreases with the network size. One can prove that this variance
tends asymptotically to zero as $n\rightarrow \infty $ at a rate $1/n$.

Furthermore, according to (\ref{Explicit Variance}), given a vector of $%
\mathbf{x}\left( 0\right) $ initial condition, the variance of the
asymptotic consensus value is equal to the empirical variance of the entries
of $\mathbf{x}\left( 0\right) $ rescaled by the factor $n\left( 1-\rho
\right) /\delta $ (where $\rho $ and $\delta $ depend on the random graph
parameters $p$ and $n$). In Fig. 2, we plot the values of the factor $%
n\left( 1-\rho \right) /\delta $ for a set of expected degrees $c\in \left\{
5,6,7,8,9,10\right\} $ while the network size $n$ varies from $5$ to $70$.
We observe that the behavior of the factor $n\left( 1-\rho \right) /\delta $
is similar for any given $c$. Again, for $n=c$ the variances are zero and the
variances grow abruptly until a maximum, $\hat{n}\left( c\right) $, is
reached. For large values of $n$, the variance slowly decays towards zero at
a rate $1/n$ for all $c$.

\bigskip

\begin{figure}
 \centering
 \includegraphics[width=0.95\linewidth]{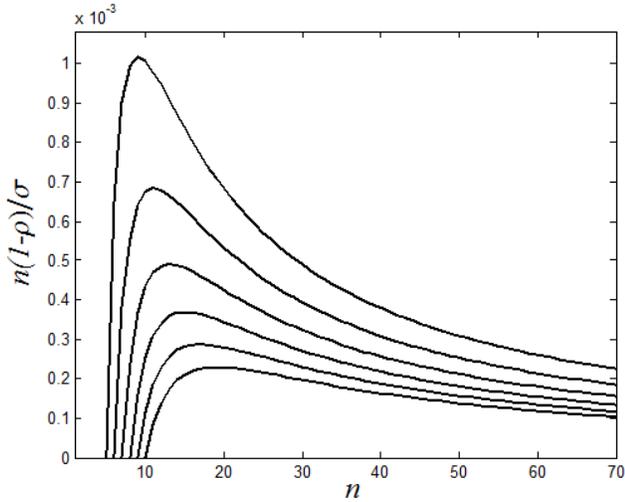}
 \caption{Several analytical variance for $c\in \left\{ 5,6,7,8,9,10\right\} $.}
\end{figure}

\bigskip

\section{Conclusions and Future Work}

We have studied the asymptotic properties of the consensus value
in distributed consensus algorithms over switching,
directed random graphs. Due to the connectivity of the expected graph,
consensus algorithms over Erd\H{o}s-R\'{e}nyi random graphs result in
asymptotic agreement. However, the asymptotic value of consensus is not
guaranteed to be the average of the initial conditions. Instead, agents will
asymptotically agree on some random value in the convex hull of the initial
conditions. While different aspects of consensus algorithms over random
switching networks, such as conditions for convergence and the speed of
convergence, have been widely studied, a characterization of the
distribution of the asymptotic consensus for general \textit{asymmetric}
random consensus algorithms remains an open problem.

In this paper, we have derived closed-form expressions for the expectation
and variance of the asymptotic consensus value as functions of the number of
nodes $n$ and the probability of existence of a communication link, $p$.
While the expectation of the distribution of the consensus value is simply
the mean of the initial conditions over the nodes of the network, the
variance presents an interesting structure. In particular, the variance of
this limiting distribution of the consensus value is equal to the empirical variance of the set of initial
conditions multiplied by a factor that depends on $p$ and $n$. We have derived an explicit expression for this factor and check its
validity with numerical simulations.

\bigskip 

\appendix[Kronecker Matrix Entries]

The Appendix contains the detailed computation of the entries of matrices $%
\mathbb{E}W_{k}$ and $\mathbb{E}[W_{k}\otimes W_{k}]$. We start by computing
the elements of $\mathbb{E}W_{k}$. The diagonal entries of $\mathbb{E}W_{k}$
are given by: 
\begin{eqnarray*}
\mathbb{E}w_{ii} &=&\mathbb{E}\left[ \frac{1}{1+d_{i}}\right]
=\sum_{k=0}^{n-1}\frac{1}{k+1}\mathbb{P}(d_{i}=k) \\
&=&\sum_{k=0}^{n-1}\frac{1}{k+1}\binom{n-1}{k}p^{k}(1-p)^{n-k-1} \\
&=&\frac{1-q^{n}}{np}\triangleq f_{1}(p,n)
\end{eqnarray*}%
On the other hand, the non-diagonal entries of $\mathbb{E}W_{k}$ result in:%
\begin{equation*}
\mathbb{E}w_{ij}=\frac{1}{n-1}[1-\mathbb{E}w_{ii}]=\frac{np-1+q^{n}}{np(n-1)}%
=\frac{1-f_{1}(p,n)}{n-1}
\end{equation*}

We now turn to the computation of the elements of $\mathbb{E}[W_{k}\otimes
W_{k}]$, which are of the form $\mathbb{E}(w_{ij}w_{rs})$. In what follows
we assume that the indices $i$, $j$, $r$, and $s$ are distinct. We first
start with elements with in the diagonal subblocks of $\mathbb{E}%
[W_{k}\otimes W_{k}]$:

\begin{eqnarray*}
\mathbb{E}w_{ii}^{2} &=&\mathbb{E}\left[ \frac{1}{(d_{i}+1)^{2}}\right]  \\
&=&\sum_{k=0}^{n-1}\frac{1}{(k+1)^{2}}\binom{n-1}{k}p^{k}q^{n-k-1} \\
&=&q^{n-1}H(p,n)
\end{eqnarray*}%
For the rest of entries in the diagonal blocks, it is useful to note that $%
\mathbb{E}w_{ii}=f_{1}(p,n)$ and $\mathbb{E}w_{ij}=\frac{1-f_{1}(p,n)}{n-1}$%
, as proved in Subsection \ref{Start Computations}:%
\begin{eqnarray*}
\mathbb{E}(w_{ii}w_{rr}) &=&\mathbb{E}w_{ii}~\mathbb{E}w_{rr}=f_{1}^{2}(p,n)
\\
&& \\
\mathbb{E}(w_{ii}w_{ri}) &=&\mathbb{E}w_{ii}~\mathbb{E}w_{ri}=f_{1}(p,n)%
\frac{1-f_{1}(p,n)}{n-1} \\
&& \\
\mathbb{E}(w_{ii}w_{rs}) &=&\mathbb{E}w_{ii}~\mathbb{E}w_{rs}=f_{1}(p,n)%
\frac{1-f_{1}(p,n)}{n-1} \\
&& \\
\mathbb{E}(w_{ii}w_{is}) &=&\frac{\mathbb{E}w_{ii}-\mathbb{E}w_{ii}^{2}}{n-1}%
=\frac{f_{1}(p,n)-q^{n-1}H(p,n)}{n-1} \\
&&
\end{eqnarray*}%
Similarly, we have the following results for the off-diagonal subblocks:

\begin{eqnarray*}
\mathbb{E}(w_{ij}w_{ji}) &=& \mathbb{E}w_{ij}~\mathbb{E}w_{ji} = \left(\frac{%
1-f_1(p,n)}{n-1}\right)^2 \\
&& \\
\mathbb{E}(w_{ij}w_{js}) &=& \mathbb{E}w_{ij}~\mathbb{E}w_{js} = \left(\frac{%
1-f_1(p,n)}{n-1}\right)^2 \\
&& \\
\mathbb{E}(w_{ij}w_{ri}) &=& \mathbb{E}w_{ij}~\mathbb{E}w_{ri} = \left(\frac{%
1-f_1(p,n)}{n-1}\right)^2 \\
&& \\
\mathbb{E}(w_{ij}w_{rj}) &=& \mathbb{E}w_{ij}~\mathbb{E}w_{rj} = \left(\frac{%
1-f_1(p,n)}{n-1}\right)^2 \\
&& \\
\mathbb{E}(w_{ij}w_{rs}) &=& \mathbb{E}w_{ij}~\mathbb{E}w_{rs} = \left(\frac{%
1-f_1(p,n)}{n-1}\right)^2 \\
&& \\
\mathbb{E}(w_{ij}w_{ii}) &=& \mathbb{E}(w_{ii}w_{is}) = \frac{%
f_1(p,n)-q^{n-1}H(p,n)}{n-1} \\
&& \\
\mathbb{E}(w_{ij}^2) & = & \mathbb{E}\left[\frac{a_{ij}^2}{(d_{i}+1)^2}%
\right] = \mathbb{E}\left[\frac{a_{ij}}{(d_{i}+1)^2}\right] \\
&=& \mathbb{E}(w_{ii}w_{ij}) = \frac{f_1(p,n)-q^{n-1}H(p,n)}{n-1} \\
&& \\
\mathbb{E}(w_{ij}w_{is}) & = & \frac{\mathbb{E}w_{ij}-\mathbb{E}w_{ij}^2-%
\mathbb{E}(w_{ij}w_{ii})}{n-2} \\
& = & \frac{1+2q^{n-1}H(p,n)-3f_1(p,n)}{(n-1)(n-2)} \\
&&
\end{eqnarray*}



\end{document}